\def\qed{{\hfill $\Box$}}
\def\man#1{m_{#1}}
\def\woman#1{w_{#1}}
\def\inter#1{\{1,\ldots, #1\}}
\def\y#1{y_{#1}}
\def\s#1{s_{#1}}
\def\p#1{p_{#1}}
\def\pair#1#2{\langle \man{#1}, \woman{#2}\rangle}
\def\noLBLpair#1#2{\langle {#1}, {#2}\rangle}
\def\ro#1{\rho_{#1}}
\def\mbest{M_0}
\def\sm#1{M_{#1}}
\def\roset{{\cal{V}}}
\def\preNT#1{{{N}^-}(#1)}
\def\sucNT#1{{{N}^+}(#1)}
\def\sqInter#1{[1, #1]}
\def\NP{$\mathcal{NP}$}
\def\NPC{$\mathcal{NP}$-complete}
\def\SMfamily{$\textsc{F}$}
\def\SMinst{$\mathcal{I}$}
\def\P#1{$\boldsymbol{\pi#1}$}
\def\SSM{$\textsc{SAT-SM}$}
\def\leafSet#1{\textsc{L($#1$)}}
\def\neighborSet#1{\textsc{N($#1$)}}
\DeclareRobustCommand{\rchi}{{\mathpalette\irchi\relax}}
\newcommand{\irchi}[2]{\raisebox{\depth}{$#1\chi$}}
\def\X#1#2{\rchi_{#1}^{#2}}
\newcommand*\circled[1]{\tikz[baseline=(char.base)]{
            \node[shape=circle,draw,inner sep=1.1pt] (char) {\scriptsize{#1}};}}
\newtheorem{defi}{Definition}%\newtheorem{pro}{Proof}
\newtheorem{cor}{Corollary}
\title{On the Complexity of Robust Stable Marriage}
\author{
Begum Genc\inst{1}, 
Mohamed Siala\inst{1},  
Gilles Simonin\inst{2},
Barry O'Sullivan\inst{1}\\
}
\institute{
 Insight, Centre for Data Analytics, Department of Computer Science, University College Cork, Ireland\\
 \email{\{begum.genc, mohamed.siala, barry.osullivan\}@insight-centre.org},\\
 \and
 TASC, Institut Mines Telecom Atlantique, LS2N UMR 6004, Nantes, France\\
 \email{gilles.simonin@imt-atlantique.fr}
}
\begin{document}

\maketitle

\begin{abstract}
\textit{Robust Stable Marriage (RSM)} is a variant of the classical \textit{Stable Marriage} problem, where the robustness of a given stable matching is measured by the number of modifications required for repairing it in case an unforeseen event occurs. 
We focus on the complexity of finding an $(a,b)$-supermatch. 
An $(a,b)$-supermatch is defined as a stable matching in which if any $a$ (non-fixed) men/women break up it is possible to find another stable matching by changing the partners of those $a$ men/women and also the partners of at most $b$ others.
In order to show deciding if there exists an $(a,b)$-supermatch is \NPC, we first introduce a SAT formulation that is \NPC~by using Schaefer's Dichotomy Theorem. 
Then, we show the equivalence between the SAT formulation and finding a $(1,1)$-supermatch on a specific family of instances. 
% We also focus on studying the threshold between the cases in $P$ and \NPC~for this problem. 

\end{abstract}

\section{Introduction}
Matching under preferences is a multidisciplinary family of problems, mostly studied by
 the researchers in the field of economics and computer science. 
There are many variants of the matching problems such as College Admission, Hospital/Residents,
 Stable Marriage, Stable Roommates, etc.%~\cite{manlove2013}.
The reader is referred to the book written by Manlove for a comprehensive background on the subject~\cite{manlove2013}.

Current studies in the literature indicate that different robustness notions for different matching problems are being studied. To the best of our knowledge, the very first robustness notion in matching problems is studied on the Geometric Stable Roommates problem~\cite{ARKIN2009219}.
Later on, there appear a few research papers on robustly stable mechanisms in matching
 markets~\cite{THEC:THEC34,DBLP:journals/geb/Afacan12,Drummond:2013:EAS:2540128.2540145}. 
The most recent notions are proposed in Stable Marriage problem, where one of them uses
 a probability model and a social cost function to measure robustness~\cite{Jacobovic2016PerturbationRS},
and the other one uses a cost function to calculate repair costs of each stable matching and
use it as a measure of robustness~\cite{genc17ijcai}.

We work on the robustness notion of stable matching proposed by Genc et. al.~\cite{genc17ijcai}.
In the context of Stable Marriage, 
%a \textit{stable matching} $M$ is 
%defined as a mapping
 %between men and women 
%set of pair such that each men/women are matched with at most one person from the opposite sex and
 %there is no man-woman pair that prefer each other to their situations in $M$.
 the purpose is to find a matching $M$ between men and women such that 
% each person is matched to at most one partner from the opposite sex and 
 no pair $\langle man,woman \rangle $ prefer each other to their situations in $M$.
% has an incentive to deviate from $M$ by being matched together.  
The authors of~\cite{genc17ijcai} introduced the notion of $(a,b)$-supermatch as a measure of robustness. % and refer to finding an $(a,b)$-supermatch as the \textit{Robust Stable Marriage (RSM)} problem. 
An \textit{$(a,b)$-supermatch} 
is a stable matching such that if any $a$ agents (men or woman) break up it is possible to find another stable matching by changing the partners of those $a$ agents with also changing the partners of at most $b$ others.
%They also propose a polynomial-time procedure for measuring the $(1,b)$-robustness of a given problem. 
This notion is inspired by the work of Ginsberg et al. on $(a,b)$-supermodels in
  Boolean Satisfiability~\cite{Ginsberg98supermodelsand}, and $(a,b)$-super solutions in CSP by Hebrard et.al.
 ~\cite{04-hebrard1,04-hebrard2,07-hebrard-phd}.
 Both finding an $(a,b)$-supermodel and an $(a,b)$-super solution are shown to be \NPC.
However, they leave the complexity of this problem open~\cite{genc17ijcai}.

The focus of this paper is to study the complexity of finding an $(a,b)$-supermatch. % according to parameters $a$ and $b$. 
%We are specially interested in studying the threshold between the cases in complexity classes $\mathcal{P}$ and \NPC~for this problem.
In order to show that the general case of RSM, which is the decision of existence of an $(a,b)$-supermatch, is \NPC, it is sufficient to show that a restricted version of the general problem is \NPC. 
Thus, we first show that the decision problem for finding a $(1,1)$-supermatch on a restricted family of instances is \NPC, then we generalize this complexity result to the general case. 

 Figure~\ref{fig:NPCompletenessGeneralization} illustrates the hierarchy between different cases of 
  finding an $(a,b)$-supermatch. 

 \begin{figure}[h!]
     \centering
     \includegraphics[width=.6\textwidth]{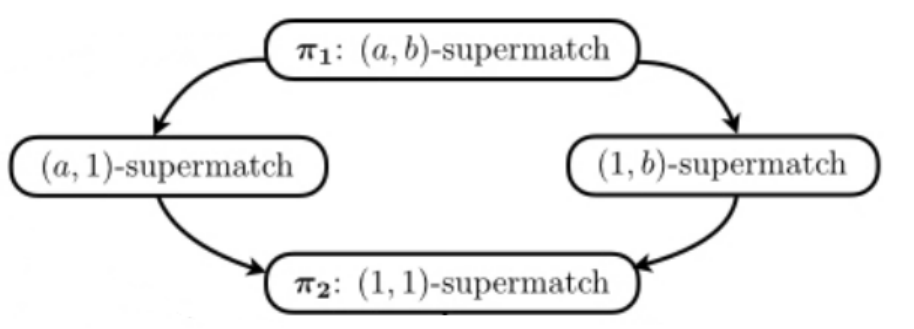}
     \caption{Illustration of the complexity hierarchy between the different cases of RSM.}
     \label{fig:NPCompletenessGeneralization}
 \end{figure}

In Section~\ref{sec:background}, the notations and the basics of the Stable Marriage problem are introduced, 
then we refer to some theorems that are already present in the literature. 
In Section~\ref{sec:specificFamily}, we describe a restricted family of Stable Marriage instances. 
In Section~\ref{sec:complexity}, after defining a specific SAT formulation based on the properties of the restricted family,  
we show by reduction that the decision version of finding an $(1,1)$-supermatch is \NPC. 
%Lastly, in Section~\ref{sec:polynomialCases}, we discuss some polynomial cases for the general problem of Robust Stable Marriage.

\section{Notations \& Background}
\label{sec:background}

% We give in this section the different notations used throughout the paper as well as some background information about the Stable Marriage problem and Schaefer's dichotomy theorem for satisfiability.

% \subsection{Stable Marriage}

An instance of the \textit{Stable Marriage problem (with incomplete lists)} takes as input a set of men $U = \{\man{1}, \man{2},\ldots,\man{n_1} \}$ and a
 set of women $W = \{\woman{1}, \woman{2},\ldots,\woman{n_2} \}$ where each person has an ordinal preference list
 over members of the opposite sex. 
For the sake of simplicity we suppose in the rest of the paper that $n_1=n_2$. 
A \textit{pair} $\langle \man{i}, \woman{j} \rangle$ is acceptable if $\woman{j}$ (respectively $\man{i}$) appears in the preference list of $\man{i}$ (respectively $\woman{j}$). 
A matching is a set of acceptable pairs where each man (respectively woman) appears at most once in any pair of $M$. 
%\todo{Change the definition of matching because the preferences could be incomplete}
%A \textit{\textbf{matching}} $M$ is a one-to-one correspondence between $U$ and $W$. 
%For each man $\man{i}$, $M(\man{i})=\woman{j}$ is called the partner of $\man{i}$, 
%and it is denoted by $M(\woman{j})=\man{i}$ in the other way around.
%$M$ can be considered as a set of pairs. 
%In that case, a pair $\pair{i}{j} \in M$ iff $M(\man{i})=\woman{j}$.  
If $\pair{i}{j} \in M$, we say that $\woman{j}$ (respectively $\man{i}$) is the partner of $\man{i}$ (respectively $\woman{j}$) and then we denote $M(\man{i})=\woman{j}$ and $M(\woman{j})=\man{i}$. 
A pair $\pair{i}{j}$ (sometimes denoted as $\noLBLpair{i}{j}$) is said to be \textit{blocking} a matching $M$ if $\man{i}$ prefers $\woman{j}$ to $M(\man{i})$
 and $\woman{j}$ prefers $\man{i}$ to $M(\woman{j})$. 
A \textit{matching} $M$ is called \textit{stable} if there exists no blocking pair for $M$.  
A \textit{pair} $\pair{i}{j}$ is said to be \textit{stable} if it appears in a stable matching. 
The stable matching, in which each man is matched to their most preferred stable partner is called the man-optimal matching and denoted by $\mbest$.
A pair $\pair{i}{j}$ is called \textit{fixed} if $\pair{i}{j}$ appears in every stable matching. In this case, the man $m_{i}$ and woman $w_{j}$ are called fixed.
%The man-optimal matching is denoted by $\mbest$ and the woman-optimal (man-pessimal) matching is denoted by $\wbest$.
In the rest of the paper we use $n$ to denote the number of non-fixed men and \SMinst~ to denote an instance of a Stable Marriage problem.
We measure the distance between two stable matchings $\sm{i}, \sm{j}$ by the number of men that have different partners in $\sm{i}$ and $\sm{j}$, denoted by $d(\sm{i}, \sm{j})$.
 
% A stable matching $M$ is called an \textbf{\textit{(a,b)}-supermatch} if any $a$ agents decide to break their matches in $M$, thereby breaking $a$ pairs, it is possible to ``repair'' $M$ (i.e., find another stable matching) by changing the assignments of those $a$ agents and the assignments of at most $b$ others.  
Formally, a stable matching $M$ is said to be $(a,b)$-supermatch if for any set $\Psi \subset M$ of $a$ stable pairs that are not fixed, there exists a stable matching $M'$ such that 
$M' \cap \Psi = \emptyset$ and $d(M, M') -a  \leq b$ ~\cite{genc17ijcai}.	

\begin{defi}[\P{_1}]
\label{def:pi}
INPUT: $a,b \in \mathbb{N}$, and a Stable Marriage instance \SMinst.\\
QUESTION: Is there an $(a,b)$-supermatch for \SMinst?
%stable matching $M$ in \SMinst~such that if any $\boldsymbol{a}$ pairs want to break their match in $M$, it is guaranteed to have another stable matching in \SMinst\ by changing the partners of these $\boldsymbol{a}$ pairs and also the partners of at most $\boldsymbol{b}$ other pairs in $M$?
\end{defi}

%The structure that represents all stable matchings forms a \textit{lattice} $\mathscr{M}$.
%In this lattice, the man-optimal matching is denoted by $\mbest$ and the woman-optimal (man-pessimal) matching is denoted by $\wbest$. 

Let $M$ be a stable matching. 
A \textit{\textbf{rotation}} $\ro{} = (\pair{k_0}{k_0}, \pair{k_1}{k_1}, \ldots,$
 $\pair{k_{l-1}}{k_{l-1}})$ (where $l \in \mathbb{N}^*$) is an ordered list of pairs
 in $M$ such that changing the partner of each man $\man{k_i}$ to the partner of the next man
 $\man{k_{i+1}}$ (the operation +1 is modulo $l$) in the list $\ro{}$ leads to a stable matching denoted by $M / \ro{}$.
The latter is said to be obtained after
\textit{eliminating} $\ro{}$ from $M$.
In this case, we say that $\pair{l_i}{l_i}$ is \textit{eliminated} by $\ro{}$, whereas
 $\pair{l_i}{l_{i+1}}$ is \textit{produced} by $\ro{}$, and that $\ro{}$ is
 \textit{exposed} on $M$.
If a pair $\pair{i}{j}$ appears in a rotation $\ro{}$, we denote it by $\pair{i}{j} \in \ro{}$.
Additionally, if a man $\man{i}$ appears at least in one of the pairs in the rotation $\ro{}$,
 we say $\man{i}$ is \textit{involved} in $\ro{}$.
%Note that, it is always the case that $M$ (strictly) dominates $M / \ro{}$. 
There exists a partial order for rotations. 
A rotation  $\rho'$ is said to precede another rotation $\rho$ (denoted by $\rho'  \prec\prec  \rho $),
 if $\rho'$ is eliminated in every sequence of eliminations that starts at $\mbest$ and ends at a stable
 matching in which $\rho$ is exposed~\cite{Gusfield:1989:SMP:68392}.
Note that this relation is transitive, that is, $\rho''  \prec\prec \rho' \wedge \rho'  \prec\prec
  \rho  \implies \rho'' \prec\prec  \rho $. 
Two rotations are said to be \textit{incomparable} if none of them precede the other.  

The structure that represents all rotations and their partial order is a directed graph called
 \textit{\textbf{rotation poset}} denoted by $\Pi = (\roset, E)$. 
Each rotation corresponds to a vertex in $\roset$ and there exists an edge from $\rho'$ to $\rho$
 if $\rho'$ precedes $\rho$. 
% The number of rotations in $\Pi$ is bounded by $n(n-1)/2$ and the number of arcs is bounded by $n^2$. 
% It should also be noted that the construction of $\Pi$ can be done in $O(n^2)$~\cite{Gusfield:1989:SMP:68392}.
There are two different edge types in a rotation poset: \textbf{\textit{type 1}} and \textbf{\textit{type 2}}. 
Suppose $\pair{i}{j}$ is in rotation $\ro{}$, 
if $\ro{}'$ is the unique rotation that moves $\man{i}$ to $\woman{j}$ then $(\ro{}', \ro{}) \in E$ and $\ro{}'$ is called
 a type 1 predecessor of $\ro{}$. 
If $\ro{}$ moves $\man{i}$ below $\woman{j}$, and $\ro{}' \not= \ro{}$ is the unique rotation that moves $\woman{j}$ above $\man{i}$, then
 $(\ro{}', \ro{}) \in E$ and $\ro{}'$ is called a type 2 predecessor of $\ro{}$~\cite{Gusfield:1989:SMP:68392}. 
A node that has no outgoing edges is called a \textit{leaf node} and a node that has no incoming edges is called \textit{root node}.
 
A \textit{\textbf{closed subset}} $S$ is a set of rotations such that for any rotation $\rho$ in $S$, if
 there exists a rotation $\rho'$ that precedes $\rho$ then $\rho'$ is also in $S$.
Every closed subset in the rotation poset corresponds to a stable matching~\cite{Gusfield:1989:SMP:68392}.
Let \leafSet{S} be the set of rotations that are the leaf nodes of the graph induced by the rotations in $S$.
Similarly, let \neighborSet{S} be the set of the rotations that are not in $S$, but all of their predecessors are in $S$. 
This can be illustrated as having a cut in the graph $\Pi$, where the cut divides $\Pi$ into two sub-graphs, namely $\Pi_1$ and $\Pi_2$. If there are any comparable nodes between $\Pi_1$ and $\Pi_2$, $\Pi_1$ is the part that contains the preceding rotations. Eventually, $\Pi_1$ corresponds to the closed subset $S$, \leafSet{S} corresponds to the leaf nodes of $\Pi_1$ and \neighborSet{S} corresponds to the root nodes of $\Pi_2$. 

An important remark is that there is a 1-1 correspondence between the matchings in \SMinst~ and the sets of incomparable rotations in \roset.
A closed subset is defined by adding all predecessors of each node in the subset to the subset. Equivalently, if all rotations that precede some other rotations in $S$ are removed from $S$, the resulting set corresponds to a set of incomparable nodes, namely \leafSet{S}. 

% \begin{corollary}
% \!Each set of incomparable rotations in $\roset$\! corresponds to a stable matching of\ \SMinst.
% \label{cor:incompSM}
% \end{corollary}
% %\vspace{-8mm}
% \begin{proof}
% %A closed subset is defined by adding all predecessors of each node in the subset to the subset. Equivalently, if all rotations that precede some other rotations in $S$ are removed from $S$, the resulting set corresponds to a set of incomparable nodes, namely \leafSet{S}. 
% In a closed subset $S$, it is always true that for every rotation $\rho$ in $S$, all the predecessors of $\rho$ are in $S$. 
% Therefore, each set of incomparable nodes can be used to define a stable matching $\sm{}$, where the closed subset of $\sm{}$ is obtained by adding all predecessors of each rotation to the set. 
% \qed
% \end{proof} 

Predecessors of a rotation $\ro{}$ in a rotation poset are denoted by $\preNT{\ro{}}$ and successors are denoted by $\sucNT{\ro{}}$.
% We measure the distance between two stable matchings $\sm{i}, \sm{j}$ by the number of men that have different partners in $\sm{i}$ and $\sm{j}$, denoted by $d(\sm{i}, \sm{j})$.
We also denote by $X(R)$ the set of men involved in a set of rotations $R$.

Let us illustrate these terms on a sample SM instance specified by the preference lists of 7 men/women in Table~\ref{table:sm} given by Genc et. al~\cite{genc17ijcai}. 
For the sake of clarity, each man $\man{i}$ is denoted with $i$ and each woman $\woman{j}$ with $j$.
Figure~\ref{figure:closedSubset} represents the rotation poset and all the rotations associated with this sample. 
\begin{table}[ht]
\begin{minipage}[b]{0.42\linewidth}
\centering
\begin{tabular}{|l|l|l|l|l|l|l|l|l|l|l|l|l|l|l|l|l|}
\cline{1-8} \cline{10-17}
$m_0$ & \multicolumn{7}{l|}{0 6 5 2 4 1 3} &  & $w_0$ & \multicolumn{7}{l|}{2 1 6 4 5 3 0} \\ \cline{1-8} \cline{10-17} 
$m_1$ & \multicolumn{7}{l|}{6 1 4 5 0 2 3} &  & $w_1$ & \multicolumn{7}{l|}{0 4 3 5 2 6 1} \\ \cline{1-8} \cline{10-17} 
$m_2$ & \multicolumn{7}{l|}{6 0 3 1 5 4 2} &  & $w_2$ & \multicolumn{7}{l|}{2 5 0 4 3 1 6} \\ \cline{1-8} \cline{10-17} 
$m_3$ & \multicolumn{7}{l|}{3 2 0 1 4 6 5} &  & $w_3$ & \multicolumn{7}{l|}{6 1 2 3 4 0 5} \\ \cline{1-8} \cline{10-17} 
$m_4$ & \multicolumn{7}{l|}{1 2 0 3 4 5 6} &  & $w_4$ & \multicolumn{7}{l|}{4 6 0 5 3 1 2} \\ \cline{1-8} \cline{10-17} 
$m_5$ & \multicolumn{7}{l|}{6 1 0 3 5 4 2} &  & $w_5$ & \multicolumn{7}{l|}{3 1 2 6 5 4 0} \\ \cline{1-8} \cline{10-17} 
$m_6$ & \multicolumn{7}{l|}{2 5 0 6 4 3 1} &  & $w_6$ & \multicolumn{7}{l|}{4 6 2 1 3 0 5} \\ \cline{1-8} \cline{10-17} 
\end{tabular}
 \bigskip
 \caption{Preference lists for men (left) and women (right) for a sample instance of size 7.}
 \label{table:sm}
\end{minipage}\hfill
\begin{minipage}[b]{0.42\linewidth}
 \centering
 \includegraphics[width=\textwidth]{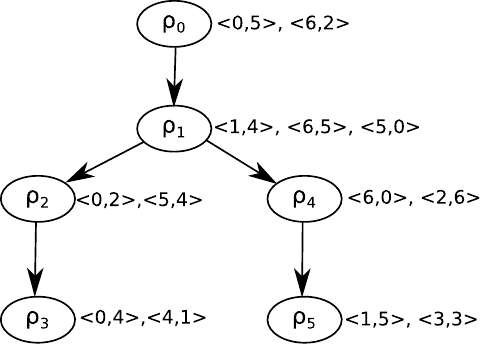}
 \caption{Rotation poset of the instance given in Table~\ref{table:sm}.}
 \label{figure:closedSubset}
\end{minipage}
\end{table}

%\vspace{-5mm}
%The man-optimal matching for the given sample is $\mbest = \{ (0,5), (1,4), (2,6)$, $(3,3), (4,1), (5,0), (6,2) \}$, where the corresponding closed subset is $S_0 = \emptyset$. 
%The closed subset $S_1 = \{ \ro{0} \}$ corresponds to the stable matching obtained by exposing $\ro{0}$ on $\mbest$, $ \mbest/ \ro{0} = \sm{1} = \{ (0,2), (1,4), (2,6), (3,3), (4,1), (5,0), (6,5) \}$.
%Similarly, 

All the stable matchings of the instance given in Figure~\ref{figure:closedSubset} are listed in Table~\ref{table:allMatchingsListed}.
For the sake of example, the stable matching $\sm{2}$ has the corresponding closed subset $S_2$, which is composed of the rotations $\{ \ro{0}, \ro{1} \} $, and it is obtained by exposing the rotation $\ro{1}$ on $\sm{1}$, also denoted by $ \sm{2} = \sm{1} / \ro{1} = \{ (0,2), (1,5)$, $(2,6), (3,3), (4,1), (5,4), (6,0) \}$. For $\sm{2}$, the leaf and neighbor nodes can be identified as \leafSet{S_2} = $ \{ \ro{1} \} $ and \neighborSet{S_2} = $\{ \ro{2}, \ro{4} \}$.

\begin{table}[h!]
\centering
\begin{tabular}{|l|c|}
\hline
\textbf{Stable Matching}                             & \textbf{Pairs}                                                  \\ \hline
$\sm{0}$                                      & $\{ (0, 5), (1, 4), (2, 6), (3, 3), (4, 1), (5, 0), (6, 2) \} $ \\ \hline
$\sm{1} = \sm{0} / \ro{0}$                    & $\{ (0, 2), (1, 4), (2, 6), (3, 3), (4, 1), (5, 0), (6, 5) \} $ \\ \hline
$\sm{2} = \sm{1} / \ro{1}$                    & $\{ (0, 2), (1, 5), (2, 6), (3, 3), (4, 1), (5, 4), (6, 0) \} $ \\ \hline
$\sm{3} = \sm{2} / \ro{4}$                    & $\{ (0, 2), (1, 5), (2, 0), (3, 3), (4, 1), (5, 4), (6, 6) \} $ \\ \hline
$\sm{4} = \sm{3} / \ro{5}$                    & $\{ (0, 2), (1, 3), (2, 0), (3, 5), (4, 1), (5, 4), (6, 6) \} $ \\ \hline
$\sm{5} = \sm{2} / \ro{2}$                    & $\{ (0, 4), (1, 5), (2, 6), (3, 3), (4, 1), (5, 2), (6, 0) \} $ \\ \hline
$\sm{6} = \sm{5} / \ro{4} = \sm{3} / \ro{2}$  & $\{ (0, 4), (1, 5), (2, 0), (3, 3), (4, 1), (5, 2), (6, 6) \} $ \\ \hline
$\sm{7} = \sm{4} / \ro{2} = \sm{6} / \ro{5}$  & $\{ (0, 4), (1, 3), (2, 0), (3, 5), (4, 1), (5, 2), (6, 6) \} $ \\ \hline
$\sm{8} = \sm{5} / \ro{3}$                    & $\{ (0, 1), (1, 5), (2, 6), (3, 3), (4, 4), (5,2 ), (6, 0) \} $ \\ \hline
$\sm{9} = \sm{6} / \ro{3} = \sm{8} / \ro{4}$  & $\{ (0, 1), (1, 5), (2, 0), (3, 3), (4, 4), (5, 2), (6, 6) \} $ \\ \hline
$\sm{10} = \sm{7} / \ro{3} = \sm{9} / \ro{5}$ & $\{ (0, 1), (1, 3), (2, 0), (3, 5), (4, 4), (5, 2), (6, 6) \} $ \\ \hline
\end{tabular}
\caption{The list of all stable matchings for the instance given in Figure~\ref{figure:closedSubset}.}
\label{table:allMatchingsListed}
\end{table}

\subsection{Schaefer's Dichotomy Theorem for Satisfiability}

The original Schaefer's dichotomy theorem is proposed in~\cite{Schaefer:1978:CSP:800133.804350}. 
In this section, we use the same terminology and notations as in~\cite{ubcomplexity}. 
A \textbf{literal} is a Boolean variable or its negation. 
A \textbf{clause} is a disjunction of literals. 
If $x$ is a Boolean variable, then the literal $x$ is called \textit{positive} and the literal $\neg{x}$ is called \textit{negative}.
We shall use the term \textit{formula} to say a Boolean formula given in a conjunctive normal form (\textbf{CNF}) as a finite set of clauses. 

A formula is called \textbf{Horn} (respectively \textbf{dual-Horn}) if every clause in this formula contains a positive (respectively negative) literal. 
A \textit{linear equation over the $2$-element field} 
is an expression of the form $x_1 \oplus x_2 \ldots \oplus x_k =\delta $
where $\oplus$ is the sum modulo 2 operator and $\delta$ is $0$ or $1$.
An \textit{\textbf{affine formula}} is a 
conjunction of linear equations over the $2$-element field. 
%\end{defi}

An \textbf{assignment} is a mapping from (Boolean) variables to $\{ true, false\} $. An assignment $A$ is said to satisfy a clause $C$ if and only if there exists a variable $x$ such that $C$ contains $x$ and the assignment of $x$ by $A$ is $true$ or 
$C$ contains $\neg{x}$ and the assignment of $x$ by $A$ is $false$. 
A \textit{Boolean Constraint of arity $k$} is a function 
$\phi: \{true,false\} ^k \rightarrow \{true, false\}$.
Let $(x_1, \ldots x_k)$ be a sequence of Boolean variables and 
$\phi$ be a Boolean constraint of arity $k$.
The pair $\langle \phi , (x_1, \ldots x_k) \rangle$ is called 
a \textit{constraint application}.
An assignment $A$ to $(x_1, \ldots x_k) $ satisfies 
$\langle \phi , (x_1, \ldots x_k) \rangle$  
if $\phi$ evaluates to $true$ on the truth values assigned by $A$. 
Let $\Phi$ be a set of constraint applications, and $A$ be an assignment to all variables occurring in $\Phi$. 
$A$ is said to be a \textit{satisfying assignment} of $\Phi$ if $A$ satisfies every constraint application in $A$.

Let $\cal C$ be a set of Boolean constraints. 
$\boldsymbol{SAT({\cal C})}$ is defined as the following decision problem:
Given a finite set $\Phi$ of constraints applications from $\cal C$, is there a satisfying assignment for $\Phi$?

\begin{theorem}{Dichotomy Theorem for Satisfiability~\cite{ubcomplexity,Schaefer:1978:CSP:800133.804350}}. 
Let $\cal C$ be a set of Boolean constraints. 
If $\cal C$ satisfies at least one of the conditions (a)-(f) below, then $SAT({\cal C})$ is in {$\mathcal{P}$}. Otherwise, $SAT({\cal C})$ is \NPC.

\begin{itemize}
\item [a)] Every constraint in $\cal C$ evaluates to $true$ if all assignments are $true$.
\item [b)] Every constraint in $\cal C$ evaluates to $true$ if all assignments are $false$.
\item [c)] Every constraint in $\cal C$ can be expressed as a Horn formula.
\item [d)] Every constraint in $\cal C$ can be expressed as a dual-Horn formula.
\item [e)] Every constraint in $\cal C$ can be expressed as affine formula.
\item [f)] Every constraint in $\cal C$ can be expressed as a 2-CNF formula.
\end{itemize}
\label{theo:SchaeferDichotomy}
\end{theorem}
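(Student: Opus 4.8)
The statement is Schaefer's classical dichotomy, so in practice I would simply invoke the cited references rather than reprove it; what follows is the shape a self-contained argument would take. The proof splits into an easy tractability direction and a hard hardness direction. For the tractability direction I would verify, for each of (a)--(f), a polynomial-time decision procedure for $SAT(\mathcal{C})$. Conditions (a) and (b) are immediate: if every constraint evaluates to $true$ under the all-$true$ (respectively all-$false$) assignment, then every instance $\Phi$ is a yes-instance and no computation is needed. For (c) I would rewrite each constraint application as a Horn formula and run unit propagation, which decides Horn satisfiability in linear time; (d) is the mirror image via dual-Horn. For (e) the conjunction of the affine representations is a linear system over the two-element field, decided by Gaussian elimination. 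For (f) the $2$-CNF representation is decided through the implication graph and its strongly connected components. The only care needed is that ``can be expressed as'' refers to the relation (satisfying set) of each constraint, so I would first put each constraint application into the stated normal form and only then invoke the matching algorithm.

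For the hardness direction, assume $\mathcal{C}$ satisfies none of (a)--(f). Membership of $SAT(\mathcal{C})$ in \NP~is clear: guess an assignment and check every constraint application in polynomial time. The substance is the \NPC-hardness. I would organise it through the Galois connection between Boolean constraint languages and their closure operations (polymorphisms). Each tractable class corresponds to closure of all relations in $\mathcal{C}$ under a fixed operation: the two constant operations for (a) and (b), binary $\wedge$ for Horn, binary $\vee$ for dual-Horn, the ternary majority operation for $2$-CNF, and the ternary minority (exclusive-or) operation for the affine case. Failing all six conditions means no relation in $\mathcal{C}$ is preserved by all copies of any single one of these operations, i.e. the polymorphism clone $\mathrm{Pol}(\mathcal{C})$ lies below every ``tractable'' maximal clone of Post's lattice; in particular, failing (a) and (b) lets one pin auxiliary variables to the constants $true$ and $false$.

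The core step is an expressibility argument: using conjunction, existential quantification and identification of variables --- equivalently, primitive-positive definitions --- a language whose polymorphisms avoid all six maximal clones can define every Boolean relation, and in particular the relations of a fixed \NPC~template such as positive $1$-in-$3$ satisfiability or $3$-satisfiability. Each such definition is realised by a fixed-size gadget, so replacing every clause of the \NPC~template by its gadget produces, in polynomial time, an equisatisfiable instance of $SAT(\mathcal{C})$; this is the required reduction. I expect this step to be the main obstacle. Establishing that escaping all six closure conditions actually forces full expressive power requires either Schaefer's original combinatorial case analysis on the shape of the relations or the Post-lattice classification of Boolean clones, together with a verification that each gadget is faithful (introduces no spurious solutions after projecting out the auxiliary variables) and of bounded size. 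Essentially all the difficulty of the theorem is concentrated here; once expressibility is in hand, the polynomial reduction and hence \NPC-ness follow routinely.
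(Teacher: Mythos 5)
The paper does not prove this statement at all: it is Schaefer's dichotomy theorem, imported verbatim as cited background (Schaefer 1978, Dantsin--Hirsch 2009) and then used as a black box in the proof of Theorem~\ref{thm:pi2}, which is exactly what you propose to do. Your accompanying sketch --- polynomial-time procedures for conditions (a)--(f) and a polymorphism/Post-lattice expressibility argument for hardness, with the honest caveat that the full case analysis (e.g.\ handling complement-closed languages when pinning constants) is where the real work lies --- is a fair outline of the standard proof, so it is consistent with, and strictly more detailed than, anything in the paper.
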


% second section
\section{A specific problem family} %\gil{Put the description of properties}
\label{sec:specificFamily}

In this section, we describe a restricted, specific family \SMfamily\ of Stable Marriage instances over properties on its generic rotation poset $\Pi_F = (\roset_F,E_F)$.

\begin{enumerate}[label=Property \arabic*,leftmargin=*]
  \item\label{prop:twopairs} Each rotation $\ro{i} \in \roset_F$, contains exactly 2 pairs $\ro{i} = ( \pair{i1}{i1},$ $\pair{i2}{i2} )$.
  \item\label{prop:2pre2suc} Each rotation $\ro{i} \in \roset_F$, has at most 2 predecessors and at most 2 successors. 
  \item\label{prop:type1} Each edge $e_i \in E_F$, is a type 1 edge.
  \item\label{prop:atleast2} For each man $ \man{i}, i \in \sqInter{n}$, $\man{i}$ is involved in at least 2 rotations.
\end{enumerate}

 Figure~\ref{fig:piFamilyGeneral2} illustrates these properties. Note that, the ordering of the pairs is not important as there exist only two pairs in each rotation.

 \begin{figure}[h!]
     \centering
     \includegraphics[width=.87\textwidth]{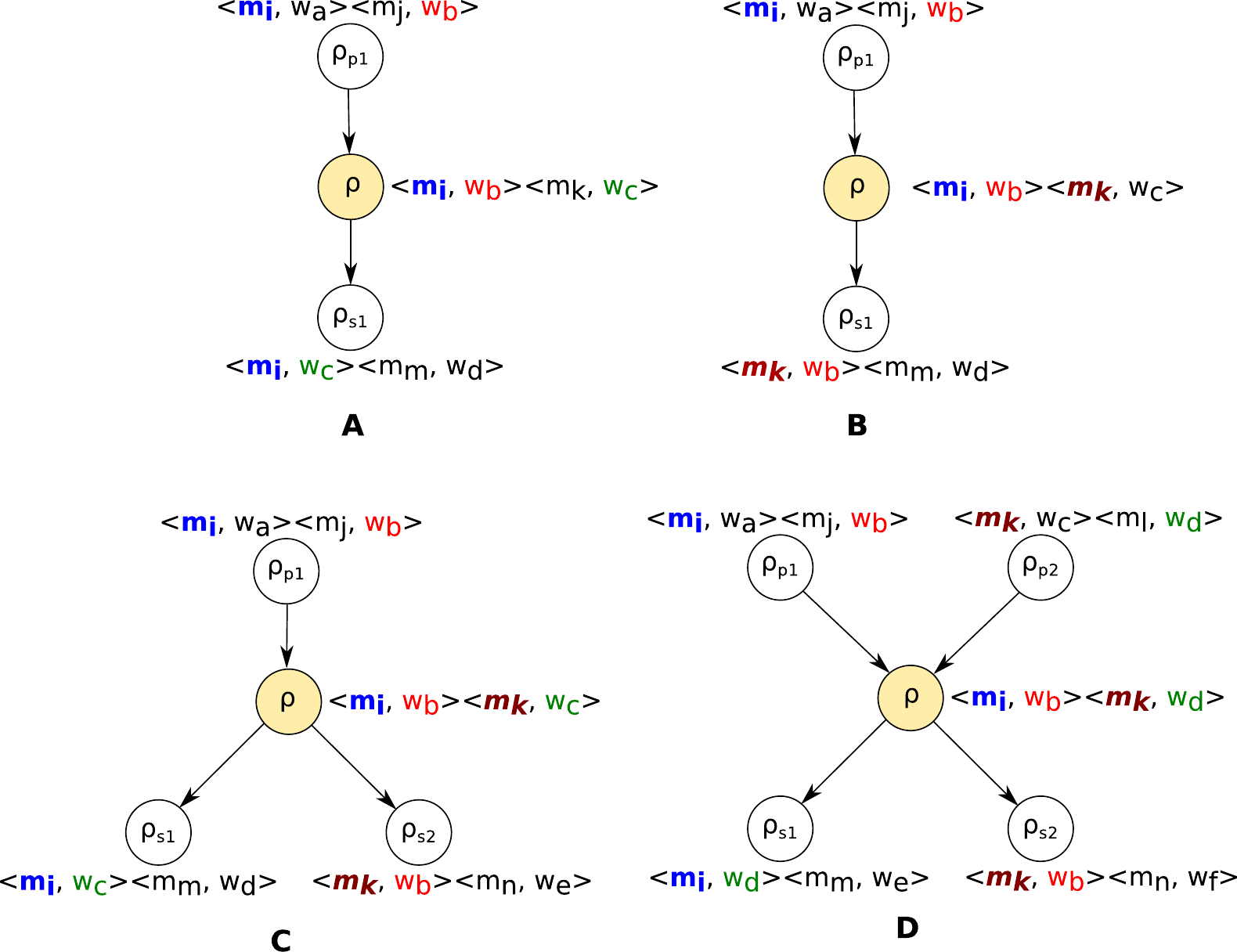}
     \caption{An illustration of the men/women included in rotations for cases where a rotation has exactly 1 predecessor and 1 successor (A,B), 1 predecessor and 2 successors, which is also similar to 1 successor and 2 predecessors (C), 2 predecessors and 2 successors (D).}
     \label{fig:piFamilyGeneral2}
 \end{figure}
 
 We would like to remark the difference between the cases A and B in Figure~\ref{fig:piFamilyGeneral2}.
 Due to~\ref{prop:type1}, any two rotations that have an edge between, contain a man and a woman in common. 
 When the case is generalized to 3 rotations $\ro{p1} - \ro{} - \ro{s1}$, it should be noted that those three rotations either contain the same man $\man{i}$ but contain different women, $\woman{b}$ between $\ro{p1}$ and $\ro{}$, but $\woman{c}$ between $\ro{}$ and $\ro{s1}$ as in the case A or the same woman in all three rotations but different men $\man{i}$ between $\ro{p1}$ and $\ro{}$, but $\man{k}$ between $\ro{}$ and $\ro{s1}$ as in the case B.

\begin{lemma}
For each two different paths $P_1$ and $P_2$ defined on $\Pi_F$, where both start at rotation $\ro{s}$, end at $\ro{t}$, and the pair $\pair{e}{f} \in \ro{s}$, if all rotations on $P_1$ (respectively $P_2$) contain $\man{e}$, at least one of the rotations on $P_2$ (respectively $P_1$) does not contain $\woman{f}$.  
\label{lem:propRematch}
\end{lemma}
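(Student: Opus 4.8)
The plan is to argue by contradiction and, crucially, to observe that the whole argument can be localised at the common endpoint $\rho_t$, so that none of the intermediate rotations on the two paths actually need to be tracked. I would assume the hypothesis, that every rotation on $P_1$ contains $m_e$, together with the negation of the conclusion, namely that every rotation on $P_2$ also contains $w_f$. Since $\rho_t$ lies on both paths, it then contains $m_e$ (from $P_1$) and $w_f$ (from $P_2$). Because $P_1$ and $P_2$ are two genuinely different paths from $\rho_s$ to $\rho_t$ we have $\rho_s \neq \rho_t$, and reading the edges of either path as precedences (using transitivity) gives $\rho_s \prec\prec \rho_t$. The target is to show that $\rho_t$ is forced to \emph{produce} the very pair $\langle m_e, w_f\rangle$ that $\rho_s$ eliminates, which will contradict $\rho_s \prec\prec \rho_t$.

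Next I would pin down the two pairs of $\rho_t$. By Property~\ref{prop:twopairs}, $\rho_t$ has exactly two pairs, and since it involves both $m_e$ and $w_f$, each of $m_e$ and $w_f$ sits in one of them. Here I would invoke the standard fact that a stable pair is eliminated by at most one rotation: as $\langle m_e, w_f\rangle \in \rho_s$ and $\rho_s \neq \rho_t$, the pair $\langle m_e, w_f\rangle$ cannot be one of the pairs of $\rho_t$. Hence $m_e$ and $w_f$ occur in two \emph{distinct} pairs of $\rho_t$, so that $\rho_t = (\langle m_e, w^*\rangle, \langle m^{**}, w_f\rangle)$ with $w^* \neq w_f$ and $m^{**} \neq m_e$.

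I would then apply the rotation-elimination semantics to this two-pair rotation: eliminating $\rho_t$ moves $m_e$ to the partner of the other man, i.e.\ to $w_f$, so $\rho_t$ produces $\langle m_e, w_f\rangle$. This is precisely the step where Property~\ref{prop:twopairs} is indispensable, since with more than two pairs $m_e$ and $w_f$ need not be cyclically adjacent and the produced pair need not be $\langle m_e, w_f\rangle$. By the definition of a type~1 edge, the unique rotation producing $\langle m_e, w_f\rangle$ precedes the rotation $\rho_s$ that eliminates it, giving $\rho_t \prec\prec \rho_s$. Together with $\rho_s \prec\prec \rho_t$ this contradicts the antisymmetry (acyclicity) of the rotation poset, which proves the claim; the symmetric statement follows verbatim by exchanging the roles of $P_1$ and $P_2$.

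I expect the main obstacle to be conceptual rather than computational: resisting the temptation to track how each path \emph{follows} a man or a woman rotation by rotation, in the spirit of the cases A and B of Figure~\ref{fig:piFamilyGeneral2}, where coincidental shared agents make a clean induction delicate, and instead realising that only the membership of $m_e$ and $w_f$ in the single rotation $\rho_t$ is needed. The one auxiliary fact I would take care to state explicitly is that every stable pair is produced by exactly one rotation and eliminated by exactly one rotation, with the producer always preceding the eliminator; this is what simultaneously forbids $\langle m_e, w_f\rangle$ from reappearing as a pair of $\rho_t$ and converts ``$\rho_t$ produces $\langle m_e, w_f\rangle$'' into the precedence $\rho_t \prec\prec \rho_s$ that closes the contradiction.
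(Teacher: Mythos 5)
Your proof is correct and takes essentially the same route as the paper's: assume for contradiction that $\man{e}$ is carried along all of $P_1$ and $\woman{f}$ along all of $P_2$, observe both reach $\ro{t}$, conclude that $\ro{t}$ must (re)produce the pair $\pair{e}{f}$ already eliminated by $\ro{s}$, and derive a contradiction with the order of eliminations. Your write-up is in fact tighter than the paper's one-paragraph argument: you explicitly rule out $\pair{e}{f}$ being a pair of $\ro{t}$ itself, make explicit the reliance on~\ref{prop:twopairs} (two pairs per rotation) for the produced pair to be exactly $\pair{e}{f}$, and phrase the final contradiction as a cycle $\ro{s} \prec\prec \ro{t} \prec\prec \ro{s}$ in the poset, whereas the paper appeals informally to the fact that an eliminated pair can never be produced again.
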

\begin{proof}
Suppose for contradiction that $\man{e}$ is involved in all rotations on $P_1$ and $\woman{f}$ is involved in all rotations on $P_2$. This scenario is likely to occur as the~\ref{prop:type1} indicates that all edges are of type 1, which is also easy to observe on the Figure~\ref{fig:piFamilyGeneral2}, any two rotations connected by an edge in the rotation poset always contain a man, and a woman in common. In this case, because of the supposition, the $\man{e}$ and $\woman{f}$ are carried on to the rotation $\ro{t}$, the pair $\pair{e}{f}$ is reproduced. In other words, exposing $\ro{t}$ on a stable matching produces the pair $\pair{e}{f}$. However, this pair is already eliminated by $\ro{s}$, meaning this couple is already produced. The supposition contradicts the fact that exposing rotations on stable matchings causes men to be matched with their less preferred partners, and if a couple is eliminated once can not be produced again.
\end{proof}

\begin{defi}[\P{_1^F}]
\label{def:pi'}
A particular case of \P{_1}, with the restrictions from problem family \SMfamily. 
\end{defi}

% In order to prove that the general problem \P{_1} is \NPC, we first show that the restricted family problem \P{_1^F} is \NPC. 
% In order to do this, we prove it for a particular case noted \P{_2^F}.

% BEGUM: added
\begin{defi}[\P{_2}]
\label{def:pi_2}
The special case of \P{_1}, where $a=1, b=1$. 
% Let \P{_2} be the special case of \P{_1}, where $a=1, b=1$. 
\end{defi}

\begin{defi}[\P{_2^F}]
\label{def:pi2}
% Let \P{_2^F} be finding a $(1,1)$-supermatch problem with the restriction from \SMfamily. \\
INPUT: A Stable Marriage instance \SMinst~from family \SMfamily.\\
QUESTION: Is there a $(1,1)$-supermatch for \SMinst?
%stable matching $M$ in \SMinst~such that if any one of the pairs $\pair{i}{j}$ want to break their match in $M$, it is guaranteed to have another stable matching in \SMinst\ by changing the partners of $\man{i}$ and $\woman{j}$ and also the partners of at most $1$ other pair $\pair{k}{l} \in M$, where $i \neq k, j \neq l$.
\end{defi}

\section{Complexity results}
\label{sec:complexity}

In order to prove that the general problem \P{_1} is \NPC, we show that the restricted family problem \P{_1^F} is \NPC. 
And in order to do this, we first prove it for a particular case noted \P{_2^F} by introducing an \NPC~
SAT formulation denoted by \SSM.

\subsection{\NP -completeness}

%\begin{defi}
%\label{def:SSM}
%Let \SSM\ be the specific SAT SuperMatch problem that we will define in this section. 
\SSM\ takes as input a set of integers $\boldsymbol{\X{}{}}=[1,|\boldsymbol{\X{}{}}|]$, 
$n, n \in \mathbb{N^*}$ lists $l_1, l_2, \ldots, l_n$
where each list is an ordered list of integers of $\boldsymbol{\X{}{}}$,
and three sets of distinct Boolean variables
$Y= \{\y{e} ~|~ e \in \boldsymbol{\X{}{}} \}$, 
$S= \{\s{e} ~|~ e \in \boldsymbol{\X{}{}} \}$, 
and  $P= \{\p{e} ~|~ e \in \boldsymbol{\X{}{}} \} \}$.

%defined by a set $\boldsymbol{\X{}{}}$ of integers, 
%$n$ lists denoted by $l_a \in \boldsymbol{L}$, where $a \in \sqInter{n}$ containing the integers in $\boldsymbol{\X{}{}}$ in an orderly manner, 
%and a specific SAT formula built from these lists. 
%The \textbf{decision problem} consists in deciding if there exists a valid assignment of each literal that satisfies the CNF formula.
\paragraph*{Conditions on the lists: }
The lists $l_1, \ldots, l_n$ are subject to the following constraints: 
%For each $a \in \sqInter{n}$, we have one list of ordered indexes 
First, each list $\forall a \in \sqInter{n}$, $l_a = (\X{1}{a}, \ldots, \X{k_{l_a}}{a})$, where $k_{l_a}= |l_a| \geq 2$. 
Second, each element of $\boldsymbol{\X{}{}}$ appears in exactly two different lists. 
%$l_a$ and $l_b$. 
%That is, , where $a,b \in \sqInter{n}$ s.t. $\X{i}{a}= \X{j}{b}$, 
%with $1\leq i \leq k_{l_a}, 1\leq j \leq k_{l_b}$. 
For illustration, the set $\boldsymbol{\X{}{}}$ represents the indexes of rotations and a list $l_a$ represents the index of each rotation having the man $\man{a}$. 
The order in $l_a$ specifies the path in the rotation poset from the first rotation to the last one for a man $\man{a}$. 
And the restriction for having each index in two different lists is related to~\ref{prop:twopairs}.

In addition to those two conditions, we have the following rule over the lists:\\
\phantomsection
\label{prop:pathRestrictionInSAT}\textbf{[Rule 1]}
% For any couple of indexes denoted as $(\X{i}{a}, \X{j}{a})$ from the same list $l_a$ where $a \in \sqInter{n}, j>i$, 
% there cannot exist a sequence between these two indexes, 
% constructed by iterating the two consecutive rules $\sigma$ and $\theta$ for each two consecutive elements (i.e. [$\theta$-$\sigma$-$\theta$- \ldots - $\sigma$(or $\theta$)]). 
% For each list $l_a, a \!\in\! \sqInter{n}$, $1 \leq i \leq k_{l_a}$, each element of the sequence of form $\X{i}{a}$ is followed by:
%For any two elements $\X{i}{m}$ and $\X{j}{m}$ from the same list $l_m$ where $m \in \sqInter{n}, j>i$,  
%there does not exist any sequence that starts at $\X{i}{m}$ and ends at $\X{j}{m}$, 
%constructed by iterating the two consecutive rules $\sigma$ and $\theta$ detailed as follows: 
%An element of the sequence denoted by $\X{i}{a}$ is only followed by one of the following:
%\begin{itemize}
%	\item[$\sigma$)] $\X{i+1}{a}$, where $i+1 \leq k_{l_a}$.
%   	\item[$\theta$)] $\X{j}{b}$, where $\X{i}{a} = \X{j}{b}, b \in \sqInter{n},  a \neq b, 1 \leq j \leq k_{l_b}$.
%\end{itemize}
For any $\X{i}{m}$ and $\X{j}{m}$ from the same list $l_m$ where $m \in \sqInter{n}$ and $ j>i$,  
there does not exist any sequence $S$ that starts at $\X{i}{m}$ and ends at $\X{j}{m}$ 
constructed by iterating the two consecutive rules $\sigma$ and $\theta$ below: 
\begin{itemize}
	\item given $\X{e}{a}\in S$, the next element in $S$ is $\X{e+1}{a}$, where $e+1 \leq k_{l_a}$.
   	\item given $\X{e}{a}\in S$, the next element in $S$ is $\X{f}{b}$, where $\X{e}{a} = \X{f}{b}, a \neq b  \in \sqInter{n}$, and $1 \leq f \leq k_{l_b}$.
\end{itemize}

%Now we can define the SAT formulation. First, we start with defining the variables:  
%Let $\vert \boldsymbol{\X{}{}} \vert \in \mathbb{N}^*$, and
%$Y= \{\y{e} ~|~ e \in \boldsymbol{\X{}{}} \}$, 
%$S= \{\s{e} ~|~ e \in \boldsymbol{\X{}{}} \}$, 
%and  $P= \{\p{e} ~|~ e \in \boldsymbol{\X{}{}} \} \}$ be three sets of distinct boolean variables.

\paragraph*{Conditions on the clauses:}
The CNF that defines \SSM\ is a conjunction of four groups of clauses: \circled{A}, \circled{B}, \circled{C} and \circled{D}.
The groups are subject to the following conditions:

\circled{A}: For any list $l_a = (\X{1}{a}, \ldots, \X{k_{l_a}}{a})$, 
we have a disjunction between the $Y$-elements and the $P$-elements as %$\small{\bigwedge_{l=1}^{n}\left( \bigvee_{i=1}^{k_{l_a}} y_{\X{i}{a} } \lor p_{\X{i}{a}}\right) }$.
$\bigvee_{i=1}^{k_{l_a}} y_{\X{i}{a} } \lor p_{\X{i}{a}} $.

\begin{eqnarray}
\textnormal{\circled{A} is defined by } \small{\boldsymbol{\bigwedge_{a=1}^{n}\left( \bigvee_{i=1}^{k_{l_a}} y_{\X{i}{a} } \lor p_{\X{i}{a}}\right) }}.
\label{clausesA}
\end{eqnarray}

\circled{B}: For any list $l_a = (\X{1}{a}, \ldots, \X{k_{l_a}}{a})$, 
we have a conjunction of disjunctions between two $S$-elements with consecutive indexes as $\bigwedge_{i=1}^{k_{l_a}-1} s_{\X{i}{a}} \lor \neg s_{\X{i+1}{a}}$.

\begin{eqnarray}
\textnormal{\circled{B} is defined by } \small{\boldsymbol{\bigwedge_{a=1}^{n} \bigwedge_{i=1}^{k_{l_a}-1} s_{\X{i}{a}} \lor \neg s_{\X{i+1}{a}} }}.
\label{clausesB}
\end{eqnarray}

\circled{C}: This group of clauses is split in two. 
For any list $l_a = (\X{1}{a}, \ldots, \X{k_{l_a}}{a})$, 
the first sub-group $C_1$ contains all the clauses defined by the logic formula $\bigwedge_{i=1}^{k_{l_a}-1} y_{\X{i}{a}} \rightarrow s_{\X{i}{a}} \land \neg s_{\X{i+1}{a}}$. 
With a CNF notation, it leads to $\bigwedge_{i=1}^{k_{l_a}-1} (\neg y_{\X{i}{a}} \lor s_{\X{i}{a}}) \land (\neg y_{\X{i}{a}} \lor \neg s_{\X{i+1}{a}}) $. Note that, $C_1$ also covers the special case, when $i = k_{l_a}$.

\begin{eqnarray}
C_1 \textnormal{ is defined by }
\small{\boldsymbol{\bigwedge_{a=1}^{n}\left( \bigwedge_{i=1}^{k_{l_a}} (\neg y_{\X{i}{a}} \lor s_{\X{i}{a}}) \land \bigwedge_{i=1}^{k_{l_a}-1} (\neg y_{\X{i}{a}} \lor \neg s_{\X{i+1}{a}}) \right) }}.
\end{eqnarray}

The second sub-group $C_2$ has three specific cases according to the position of elements in the ordered lists. 
As fixed above, each element of $\boldsymbol{\X{}{}}$ appears in exactly two different lists. 
Thus, for any $e \in \boldsymbol{\X{}{} }$, there exists two lists $l_a$ and $l_b$  such that $\X{i}{a} = \X{j}{b} = e$, 
where $i\in \sqInter{k_{l_a}}$ and $j\in \sqInter{k_{l_b}}$.
For each couple of elements of $\boldsymbol{\X{}{}}$ denoted by $(\X{i}{a}, \X{j}{b})$ that are equal to the same value $e$, 
we define a clause with these elements and the next elements in their lists respecting the ordering: 
$s_{\X{i}{a}} \rightarrow y_{\X{i}{a}} \lor s_{\X{i+1}{a}} \lor s_{\X{j+1}{b}}$. 
With a CNF notation it leads to: 
$(\neg s_{\X{i}{a}} \lor y_{\X{i}{a}} \lor s_{\X{i+1}{a}} \lor s_{\X{j+1}{b}})$.

We add the two specific cases where $\X{i}{a}$ or $\X{j}{b}$,  
or both are the last elements of their ordered lists. 
% The complete formula for the set of clauses $C_2$ is:
% \begin{eqnarray}
% \begin{array}{l}
% C_2\quad \left[
% \begin{array}{l}
% \boldsymbol{\underset{\scriptsize{\begin{array}{c}(\X{i}{a}, \X{j}{b})\ s.t.\ \X{i}{a}= \X{j}{b}\\ i\not=k_{l_a},\ j\not=k_{l_b}\end{array} } } {\bigwedge} \hspace{1.5pc}
% (\neg s_{\X{i}{a}} \lor y_{\X{i}{a}} \lor s_{\X{i+1}{a}} \lor s_{\X{j+1}{b}}) }\\
% \boldsymbol{\underset{\scriptsize{\begin{array}{c}(\X{i}{a}, \X{j}{b})\ s.t.\ \X{i}{a}= \X{j}{b}\\ i\not=k_{l_a},\ j=k_{l_b}\end{array}}}{\bigwedge} \hspace{1.5pc}
% (\neg s_{\X{i}{a}} \lor y_{\X{i}{a}} \lor s_{\X{i+1}{a}} )}\\
% \boldsymbol{\underset{\scriptsize{\begin{array}{c}(\X{k_{l_a}}{a}, \X{k_{l_b}}{b})\ s.t.\ \X{k_{l_a}}{a}= \X{k_{l_b}}{b}\end{array}}}{\bigwedge}
% (\neg s_{\X{k_{l_a}}{a}} \lor y_{\X{k_{l_a}}{a}} )}
% \end{array}
% \right. \\
% \end{array}
% \end{eqnarray}\\
The complete formula for the set of clauses $C_2$ for each two element $(\X{i}{a}, \X{j}{b})\ s.t.\ \X{i}{a}= \X{j}{b}$ is:

\begin{eqnarray}
\begin{array}{l}
C_2\quad \left[
\begin{array}{l}
\boldsymbol{\underset{\scriptsize{\begin{array}{c}i\not=k_{l_a},\ j\not=k_{l_b}\end{array} } } {\bigwedge} \hspace{1.5pc}
\neg s_{\X{i}{a}} \lor y_{\X{i}{a}} \lor s_{\X{i+1}{a}} \lor s_{\X{j+1}{b}} }\\
\boldsymbol{\underset{\scriptsize{\begin{array}{c}i\not=k_{l_a},\ j=k_{l_b}\end{array}}}{\bigwedge} \hspace{1.5pc}
\neg s_{\X{i}{a}} \lor y_{\X{i}{a}} \lor s_{\X{i+1}{a}} }\\
\boldsymbol{\underset{\scriptsize{\begin{array}{c}i=k_{l_a},\ j=k_{l_b}\end{array}}}{\bigwedge}\hspace{1.5pc}
\neg s_{\X{k_{l_a}}{a}} \lor y_{\X{k_{l_a}}{a}} }
\end{array}
\right. \\
\end{array}
\end{eqnarray}\\

\circled{D}:  Similarly to $C_2$, for each couple of elements of $\boldsymbol{\X{}{}}$ denoted by $(\X{i}{a}, \X{j}{b})$ equal to the same value $e$, 
there exists a clause with these elements and the previous elements in their lists respecting the ordering: 
$p_{\X{i}{a}} \leftrightarrow \neg s_{\X{i}{a}} \land s_{\X{i-1}{a}} \land s_{\X{j-1}{b}}$. 
With a CNF notation, it leads to:

$(\neg p_{\X{i}{a}} \lor \neg s_{\X{i}{a}}) \land 
(\neg p_{\X{i}{a}} \lor s_{\X{i-1}{a}} ) \land
(\neg p_{\X{i}{a}} \lor s_{\X{j-1}{b}} ) \land
( s_{\X{i}{a}} \lor \neg s_{\X{i-1}{a}} \lor \neg s_{\X{j-1}{b}}\lor p_{\X{i}{a}} )
$

% By generalizing the formula for any couple, 
% and by adding the two cases where $\X{i}{l_a}$, or $\X{j}{l_b}$,  
% or both are the first elements of their respective lists,  
% the complete formula \circled{D} is described by:

% \begin{eqnarray}
% \begin{array}{l}
% \textnormal{\circled{D}}\quad \left[
% \begin{array}{l}
% \boldsymbol{
% \hspace{3.6pc} \bigwedge_{\scriptsize{e=1}}^{\scriptsize{\vert \X{}{}\vert}} \hspace{2.4pc} (\neg p_e \lor \neg s_e)}\\

% \boldsymbol{\underset{\scriptsize{\begin{array}{cc}(\X{i}{a}, \X{j}{b})\ s.t.\ \X{i}{a}= \X{j}{b}\\ i\not=1,\ j\not=1\end{array}} }{\bigwedge} 
%  (\neg p_{\X{i}{a}} \lor s_{\X{i-1}{a}} ) \land (\neg p_{\X{j}{b}} \lor s_{\X{j-1}{b}} ) }\\

%  \boldsymbol{\underset{\scriptsize{\begin{array}{cc}(\X{i}{a}, \X{j}{b})\ s.t.\ \X{i}{a}= \X{j}{b}\\ i=1,\ j\not=1\end{array}} }{\bigwedge} 
%   (\neg p_{\X{j}{b}} \lor s_{\X{j-1}{b}} ) }\\

% \boldsymbol{\underset{\scriptsize{\begin{array}{cc}(\X{i}{a}, \X{j}{b})\ s.t.\ \X{i}{a}= \X{j}{b}\\ i\not=1,\ j\not=1\end{array} }}{\bigwedge} 
% ( s_{\X{i}{a}} \lor \neg s_{\X{i-1}{a}} \lor \neg s_{\X{j-1}{b}} \lor p_{\X{i}{a}}) }\\
% \boldsymbol{\underset{\scriptsize{\begin{array}{cc}(\X{i}{a}, \X{j}{b})\ s.t.\ \X{i}{a}= \X{j}{b}\\ i=1,\ j\not=1\end{array}}}{\bigwedge} 
% ( s_{\X{i}{a}} \lor \neg s_{\X{j-1}{a}} \lor p_{\X{i}{a}} )}\\
% \boldsymbol{\underset{\scriptsize{\begin{array}{cc}(\X{1}{a}, \X{1}{b})\ s.t.\ \X{1}{a}= \X{1}{b}\end{array}} }{\bigwedge} 
% ( s_{\X{1}{a}} \lor p_{\X{1}{a}} )
% }

% \end{array}
% \right.
% \end{array}
%  \end{eqnarray}

By generalizing the formula for any couple, 
and by adding the two cases where $\X{i}{l_a}$, or $\X{j}{l_b}$,  
or both are the first elements of their respective lists,  
the complete formula \circled{D} for each two element $(\X{i}{a}, \X{j}{b})\ s.t.\ \X{i}{a} = \X{j}{b} = e$ is described by:

\begin{eqnarray}
\begin{array}{l}
\normalsize{\circled{D}}\quad \left[
\begin{array}{l}

\boldsymbol{\underset{\scriptsize{\begin{array}{cc}i\not=1,\ j\not=1\end{array}} }{\bigwedge} 
 (\neg p_{\X{i}{a}} \lor s_{\X{i-1}{a}} ) \land (\neg p_{\X{j}{b}} \lor s_{\X{j-1}{b}} ) \land } \\ 
 \hspace{6pc} \vspace{0.5pc}
 \boldsymbol{ ( s_{\X{i}{a}} \lor \neg s_{\X{i-1}{a}} \lor \neg s_{\X{j-1}{b}} \lor p_{\X{i}{a}}) }\\
\vspace{0.5pc}

\boldsymbol{\underset{\scriptsize{\begin{array}{cc} i=1,\ j\not=1\end{array}} }{\bigwedge} 
  (\neg p_{\X{j}{b}} \lor s_{\X{j-1}{b}} ) \land ( s_{\X{i}{a}} \lor \neg s_{\X{j-1}{a}} \lor p_{\X{i}{a}} ) }\\
\vspace{0.5pc}

\boldsymbol{\underset{\scriptsize{\begin{array}{cc} i=1,\ j=1 \end{array}} }{\bigwedge} 
 s_{\X{1}{a}} \lor p_{\X{1}{a}} }\\
\vspace{0.5pc}

\hspace{1.9pc} \boldsymbol{\wedge \hspace{1.9pc} \neg p_e \lor \neg s_e }

\end{array}
\right.
\end{array}
\end{eqnarray}

To conclude the definition, the full CNF formula of \SSM\ is 
\circled{A} $\land$ \circled{B} $\land\ C_1 \land C_2\ \land$ \circled{D}.
%\end{defi}

\begin{lemma}
There always exist some clauses of minimum length 4 that are defined over positive literals in \textnormal{\circled{A}}.
\label{lem:proofClauseA}
\end{lemma}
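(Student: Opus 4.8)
The plan is to argue directly from the definition of \circled{A} in~(\ref{clausesA}), without any reduction or case analysis: the statement is a purely structural consequence of the conditions imposed on the lists. First I would recall that, by the conditions on the lists, every list satisfies $k_{l_a} = |l_a| \geq 2$, and that the formulation is built from $n \in \mathbb{N}^*$ lists, so $n \geq 1$ and at least one \circled{A}-clause is present in every instance of \SSM. This already handles the existential part (``there always exist some clauses''). Fixing an arbitrary list $l_a = (\X{1}{a}, \ldots, \X{k_{l_a}}{a})$, its associated clause is $\bigvee_{i=1}^{k_{l_a}} y_{\X{i}{a}} \lor p_{\X{i}{a}}$.

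Next I would establish the two asserted properties of this clause. For \emph{positivity}, I would simply note that the clause is built exclusively from the variables $\y{\X{i}{a}}$ and $\p{\X{i}{a}}$ for $i \in \sqInter{k_{l_a}}$, and that each of these occurs without negation in~(\ref{clausesA}); hence every \circled{A}-clause is defined over positive literals only. For the \emph{length}, I would count: each position $i$ contributes exactly the two literals $\y{\X{i}{a}}$ and $\p{\X{i}{a}}$, so the clause has $2 k_{l_a}$ literals in total. Since $k_{l_a} \geq 2$, this is at least $4$, with equality precisely when $k_{l_a} = 2$; thus the shortest possible \circled{A}-clause has length exactly $4$, which is the claimed minimum.

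The only point that requires a short justification, and which I expect to be the sole mildly delicate step, is that these $2 k_{l_a}$ literals are pairwise distinct, so that the clause cannot collapse to fewer than $4$ literals. This follows from two observations: the sets $Y$, $S$, and $P$ are disjoint sets of Boolean variables, so no $\y{}$-literal can coincide with a $\p{}$-literal; and the integers $\X{1}{a}, \ldots, \X{k_{l_a}}{a}$ occurring within a single list are distinct, since by~\ref{prop:twopairs} each element of $\boldsymbol{\X{}{}}$ corresponds to a rotation and appears in exactly the two lists of the two men it involves, hence at most once per list. Combining positivity with the length bound of at least $4$ over distinct literals yields the lemma for every instance of \SSM.
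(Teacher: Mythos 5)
Your proof is correct and follows essentially the same argument as the paper: each \circled{A}-clause contributes two positive literals ($\y{}$ and $\p{}$) per list element, and since every list satisfies $k_{l_a} \geq 2$, each of the $n$ clauses has at least $4$ positive literals. Your additional check that the $2k_{l_a}$ literals are pairwise distinct (via disjointness of $Y$ and $P$ and distinctness of elements within a list) is a small point of rigor the paper's one-line proof leaves implicit, but it does not change the approach.
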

\begin{proof}
For any list of ordered elements $l_a \in \{l_1, l_2, \ldots, l_n\}$, the length of each list is defined as $k_{l_a} \geq 2$ in \SSM, which results in \circled{A} having $n$ clauses that have at least 4 positive literals in each. \qed
\end{proof}

\begin{lemma}
There always exist some clauses of length 2 that are defined over two negative literals in \textnormal{\circled{C}}.
\label{lem:proofClauseC}
\end{lemma}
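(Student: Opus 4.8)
The plan is to exhibit the required clauses explicitly inside the sub-group $C_1$ of \circled{C} and then invoke the length condition on the lists, exactly as in the proof of Lemma~\ref{lem:proofClauseA}. Recall that $C_1$ is the conjunction $\bigwedge_{a=1}^{n}\left( \bigwedge_{i=1}^{k_{l_a}} (\neg y_{\X{i}{a}} \lor s_{\X{i}{a}}) \land \bigwedge_{i=1}^{k_{l_a}-1} (\neg y_{\X{i}{a}} \lor \neg s_{\X{i+1}{a}}) \right)$. The first inner conjunction yields mixed binary clauses $(\neg y_{\X{i}{a}} \lor s_{\X{i}{a}})$ carrying one negative and one positive literal, so it is not the source we are after; the clauses we want come from the second inner conjunction, whose generic term $(\neg y_{\X{i}{a}} \lor \neg s_{\X{i+1}{a}})$ is a disjunction of exactly two negative literals.

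First I would note that each such term indeed has length $2$ and is built over the two negative literals $\neg y_{\X{i}{a}}$ and $\neg s_{\X{i+1}{a}}$, so it is of the form asserted in the statement. Second, I would argue that at least one such term always exists: by the conditions on the lists, every list $l_a$ satisfies $k_{l_a} = |l_a| \geq 2$, hence the index range $i \in \{1, \ldots, k_{l_a}-1\}$ of the second inner conjunction is non-empty (it contains at least $i=1$). Consequently, for every list $l_a$ the clause $(\neg y_{\X{1}{a}} \lor \neg s_{\X{2}{a}})$ occurs in $C_1$, and therefore in \circled{C}, which gives the claim.

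I do not expect any genuine obstacle here; the argument parallels Lemma~\ref{lem:proofClauseA} and relies only on the lower bound $k_{l_a} \geq 2$ guaranteed by the conditions on the lists. The single point requiring care is to draw the clauses from the correct part of the formula: the two-negative-literal binary clauses arise solely from the second inner conjunction of $C_1$, whereas the first inner conjunction of $C_1$ produces mixed-sign binary clauses and the clauses of $C_2$ are strictly longer and of mixed sign. Pinning down this source precisely is the only subtlety in an otherwise routine existence argument.
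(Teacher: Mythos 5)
Your proposal is correct and follows essentially the same route as the paper: both identify the clauses $(\neg y_{\X{i}{a}} \lor \neg s_{\X{i+1}{a}})$ in the first sub-group $C_1$ of \circled{C} and use the condition $k_{l_a} \geq 2$ to guarantee the index range is non-empty. If anything, your version is slightly more careful than the paper's, since you explicitly separate the all-negative clauses from the mixed-sign clauses $(\neg y_{\X{i}{a}} \lor s_{\X{i}{a}})$, which the paper lumps together.
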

\begin{proof}
The clauses in \circled{C} consists of two groups. The first group is defined over the list of ordered elements. For any two consecutive elements in a list $l_a \in \{l_1, l_2, \ldots, l_n\}$, there exists two clauses:
$\bigvee_{i=1}^{k_{l_a}-1} (\neg y_{\X{i}{a}} \lor s_{\X{i}{a}}) \land (\neg y_{\X{i}{a}} \lor \neg s_{\X{i+1}{a}})$. By definition, the minimum length of an ordered list $l_a$ is $k_{l_a} = 2$ and therefore the minimum-length list yields in 2 clauses of the defined form. Therefore, the first group includes $2 \times \sum_{a \in \inter{n}} (k_{l_a} - 1)$ clauses. \qed

\end{proof}

\begin{lemma}
Any clause defined over only positive literals of size at least two is not affine.
\label{lem:notAffine}
\end{lemma}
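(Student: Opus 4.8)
The plan is to use the algebraic characterization of affine relations rather than to argue syntactically. Identifying $true$ with $1$ and $false$ with $0$ over the $2$-element field, the set of satisfying assignments of an affine formula over a fixed set of $k$ variables is an affine subspace of $\{0,1\}^k$, and such sets are exactly those closed under the componentwise ternary operation $(a,b,c) \mapsto a \oplus b \oplus c$. Hence, to show that a clause $C = x_1 \lor \cdots \lor x_k$ over only positive literals with $k \geq 2$ is not affine, it suffices to exhibit three satisfying assignments of $C$ whose componentwise XOR falsifies $C$.

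First I would observe that a clause over only positive literals is falsified by exactly one assignment, the all-false assignment $\mathbf{0} = (0,\ldots,0)$, since every other assignment sets at least one variable to $true$ and therefore satisfies $C$. Then, using $k \geq 2$, I would take $a = (1,0,0,\ldots,0)$, $b = (0,1,0,\ldots,0)$, and $c = a \oplus b = (1,1,0,\ldots,0)$. Each of $a,b,c$ has at least one coordinate equal to $1$, so each satisfies $C$, whereas $a \oplus b \oplus c = \mathbf{0}$ is precisely the unique falsifying assignment. Consequently the solution set of $C$ is not closed under $(a,b,c) \mapsto a \oplus b \oplus c$, so it is not an affine subspace and $C$ cannot be expressed as an affine formula.

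The main point to pin down is the closure characterization itself, namely that a relation expressible as a conjunction of $GF(2)$-linear equations is exactly a relation closed under the affine combination $a \oplus b \oplus c$. The forward direction is immediate: each linear equation $\sum_i \lambda_i x_i = \delta$ is preserved by this operation, because applying it to three solutions yields residue $\delta \oplus \delta \oplus \delta = \delta$. The converse, that any set closed under this operation is cut out by linear equations, is the standard fact that affine subspaces (cosets of linear subspaces) are precisely the subsets closed under this minority-type operation; I would cite this rather than reprove it. I expect the only real subtlety, beyond invoking that characterization, to be stressing that the hypothesis $k \geq 2$ is exactly what lets me place the two single $1$'s in distinct coordinates so that $a$ and $b$ differ, which is what forces the all-zero witness.
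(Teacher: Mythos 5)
Your proof is correct, but it takes a genuinely different route from the paper's. The paper argues by counting: a clause over $k$ positive literals has exactly $2^k - 1$ satisfying assignments (every assignment except the all-false one), while an affine relation always has cardinality a power of $2$ (a fact cited from Schaefer); since $2^k - 1$ is not a power of two when $k \geq 2$, the clause cannot be affine. You instead use the closure (polymorphism) characterization: the solution set of any conjunction of $GF(2)$-linear equations is closed under the componentwise operation $(a,b,c) \mapsto a \oplus b \oplus c$, and you exhibit an explicit violation, namely $a=(1,0,\ldots,0)$, $b=(0,1,0,\ldots,0)$, $c=a\oplus b$, whose XOR is the unique falsifying all-zero assignment. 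A pleasant feature of your argument is that it only needs the easy direction of the characterization (affine $\Rightarrow$ closed), which you verify in one line by noting $\delta \oplus \delta \oplus \delta = \delta$; the converse, which you propose to cite, is never actually used, so your proof is essentially self-contained, whereas the paper leans entirely on the cited power-of-two cardinality fact. Both arguments consume the hypothesis $k \geq 2$ at the same essential point: you need it to place the two $1$'s in distinct coordinates, and the paper needs it (implicitly) because $2^1 - 1 = 1$ is a power of two --- consistent with the fact that a single positive literal $x$ is affine, being expressible as $x = 1$.
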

\begin{proof}
Any clause C of the given form with $k$ positive literals have $2^k - 1$ valid assignments. The cardinality of an affine relation is always a power of 2~\cite{Schaefer:1978:CSP:800133.804350}. Thus, C is not affine.\qed
\end{proof}
 
The \SSM\ problem is the question of finding an assignment of the Boolean variables that satisfies the above CNF formula.

\begin{theorem}
\label{thm:pi2}
The \SSM\ problem is \NPC.
\end{theorem}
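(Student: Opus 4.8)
The plan is to prove that \SSM\ is \NPC\ by a direct application of Schaefer's Dichotomy Theorem (Theorem~\ref{theo:SchaeferDichotomy}) to the constraint language ${\cal C}$ consisting of all the clause types that occur in the CNF of \SSM. Membership in \NP\ is immediate: the formula built from the lists $l_1,\ldots,l_n$ has size polynomial in $|\boldsymbol{\X{}{}}|$, so a candidate assignment can be checked against every clause in polynomial time. For hardness it then suffices to show that ${\cal C}$ satisfies none of the six tractability conditions (a)--(f); the ``otherwise'' branch of Theorem~\ref{theo:SchaeferDichotomy} then yields \NPC\ directly.

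First I would exhibit the two witness clause types that the preceding lemmas guarantee to appear. Lemma~\ref{lem:proofClauseA} supplies a purely positive clause of length at least $4$ coming from \circled{A}, and Lemma~\ref{lem:proofClauseC} supplies a binary clause over two negative literals coming from \circled{C}. These already refute four of the six conditions. Condition (a) fails because the all-true assignment falsifies a clause of the form $\neg y \lor \neg s$; condition (b) fails because the all-false assignment falsifies the positive clause; condition (e) fails immediately by Lemma~\ref{lem:notAffine}, since a positive clause of length $\geq 2$ is not affine; and condition (f) fails because a clause of length at least $3$ is not bijunctive: taking three satisfying tuples that each set a single, distinct variable to true, their coordinate-wise majority is the all-false tuple, which does not satisfy the clause, so the relation is not closed under the majority operation.

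The remaining conditions (c) and (d) must be argued through the algebraic closure characterizations rather than by counting literals, since they ask for \emph{expressibility} as a Horn, resp.\ dual-Horn, formula. A relation is Horn-expressible iff it is closed under coordinate-wise conjunction, and dual-Horn-expressible iff it is closed under coordinate-wise disjunction. The positive clause of length $\geq 2$ is not closed under conjunction (the conjunction of two unit-true satisfying tuples is the all-false tuple, which is excluded), so (c) fails; symmetrically, the negative binary clause is not closed under disjunction (the disjunction of its two one-false satisfying tuples is all-true, which is excluded), so (d) fails. Having ruled out all of (a)--(f), Theorem~\ref{theo:SchaeferDichotomy} gives that \SSM\ is \NPC.

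The main obstacle is conceptual rather than computational: conditions (c), (d) and (f) concern expressibility in a restricted format, not the syntactic shape of the clauses, so the argument must invoke the closure properties (under conjunction, disjunction, and majority) instead of simply observing that a clause has several positive or several negative literals. A secondary point requiring care is the identification of \SSM\ with $SAT({\cal C})$: one must confirm that the hard clause types genuinely occur as constraint applications in ${\cal C}$ for the family of inputs under consideration, which is precisely what Lemmas~\ref{lem:proofClauseA} and~\ref{lem:proofClauseC} provide, so that the dichotomy theorem applies to the very language that defines \SSM.
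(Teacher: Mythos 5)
Your proposal is correct and follows the same overall route as the paper: both apply Schaefer's Dichotomy Theorem directly to the clause types occurring in \SSM, using Lemma~\ref{lem:proofClauseA} (a purely positive clause of length at least 4 from \circled{A}), Lemma~\ref{lem:proofClauseC} (a binary clause over two negative literals from \circled{C}), and Lemma~\ref{lem:notAffine} as the witnesses. Your execution is, however, tighter than the paper's in two respects. First, the paper attributes the failure of condition (a) to Lemma~\ref{lem:proofClauseA} and the failure of (b) to Lemma~\ref{lem:proofClauseC}; as your argument makes clear, this is backwards: the all-true assignment satisfies every positive clause, so it is the negative clause $\neg y \lor \neg s$ that refutes (a), and the all-false assignment is what falsifies the positive clause, refuting (b). Second, the paper dismisses (c), (d) and (f) purely from the syntactic shape of the witness clauses, whereas you treat them as what they are --- \emph{expressibility} conditions --- and refute them through the closure characterizations (coordinate-wise conjunction for Horn, disjunction for dual-Horn, majority for bijunctive). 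This is the sound way to argue: a clause with several positive literals is not literally Horn, but that alone does not rule out an equivalent Horn expression, and the paper says nothing about why a clause of length four cannot be rewritten in 2-CNF. Your closure arguments fill exactly these holes.

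One caveat applies equally to your proof and to the paper's, and you flag it without fully resolving it. Schaefer's theorem establishes \NPC ness of $SAT({\cal C})$, where the input may be an \emph{arbitrary} finite conjunction of constraint applications from $\cal C$; the instances of \SSM, by contrast, are generated from lists subject to structural restrictions (each element appears in exactly two lists, Rule 1, and the four groups \circled{A}--\circled{D} are tied to one another through the same lists). Showing that the hard clause types occur in every \SSM\ instance (Lemmas~\ref{lem:proofClauseA} and~\ref{lem:proofClauseC}) guarantees that the constraint language is the intractable one, but it does not by itself show that every $SAT({\cal C})$ instance --- or some \NP-hard fragment of them --- can be realized as an \SSM\ instance, which is what a reduction \emph{to} \SSM\ requires. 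Since the paper's own proof makes the identical leap, this is not a defect of your proposal relative to the paper, but neither text closes that gap.
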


\begin{proof}

 We use Schaefer's dichotomy theorem (Theorem~\ref{theo:SchaeferDichotomy}) to prove that \SSM\ is \NPC~\cite{Schaefer:1978:CSP:800133.804350}. Schaefer identifies six cases, where if any one of them is valid the SAT problem is solved in polynomial time. Any SAT formula that does not satisfy any of those 6 is $\mathcal{NP}$-complete.

 It is easy to see the properties \textit{a}, \textit{d}, and \textit{f} in Schaefer's Dichotomy do not apply to \SSM\ due to Lemma~\ref{lem:proofClauseA}. 
 Similarly, properties \textit{b} and \textit{c} are not satisfiable because of Lemma~\ref{lem:proofClauseC}. 
 The clauses in \circled{A} are defined as clauses over positive literals and it is known that they always exist by Lemma~\ref{lem:proofClauseA}. By applying Lemma~\ref{lem:notAffine} on the clauses in \circled{A}, we infer that property \textit{e} is not applicable either.
 Hence, \SSM\ is \NPC. \qed

\end{proof}

We can now present the main result of the paper.

\begin{theorem}
\label{thm:pi2NPC}
The decision problem \P{_2^F} is \NPC.
\end{theorem}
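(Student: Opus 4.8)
The plan is to establish both membership in \NP\ and \NP-hardness. Membership is routine: a stable matching $M$ together with, for each of the $n$ non-fixed men, a witnessing stable matching serves as a polynomial certificate, since stability and the distances $d(M,M')$ can be verified in polynomial time. The substance is hardness, which I would prove by a polynomial reduction from \SSM, shown to be \NPC\ in Theorem~\ref{thm:pi2}. The guiding dictionary, already hinted at in the definition of \SSM, is: the integers of $\boldsymbol{\X{}{}}$ correspond to the rotations $\ro{e}$ of an instance in \SMfamily; each list $l_a$ corresponds to the ordered sequence of rotations in which man $\man{a}$ is involved; the fact that every element appears in exactly two lists mirrors~\ref{prop:twopairs}; and the chain order inside a list encodes type-1 precedence, i.e.~\ref{prop:type1}.

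Given a \SSM\ instance, I would first \emph{realize} its combinatorial data as an actual Stable Marriage instance $\SMinst \in \SMfamily$: build a rotation poset $\Pi_F$ with one vertex $\ro{e}$ per element $e \in \boldsymbol{\X{}{}}$, place a type-1 edge between consecutive elements of each list, and then construct preference lists whose rotation poset is (isomorphic to) $\Pi_F$, arranging that~\ref{prop:twopairs}--\ref{prop:atleast2} all hold. This step relies on a poset-realization construction tailored to the family; the role of \textbf{[Rule 1]} is to guarantee that the chains define a genuine partial order admitting such a realization, consistent with Lemma~\ref{lem:propRematch} (no eliminated pair is ever reproduced along two paths). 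I expect this realization, together with the verification that the produced instance genuinely satisfies all four properties, to be the main obstacle.

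For the equivalence of solutions I would fix the interpretation $\s{e} \equiv (\ro{e} \in S)$, $\y{e} \equiv (\ro{e} \in \leafSet{S})$, and $\p{e} \equiv (\ro{e} \in \neighborSet{S})$ for a closed subset $S$ (equivalently, for the stable matching $M$ it defines). Under this reading one checks that \circled{B} forces $S$ to be closed along every type-1 chain; that $C_1$ and $C_2$ together pin $\y{e}$ to be exactly the leaf indicator (a rotation is a leaf iff it is in $S$ but none of its two list-successors is); that \circled{D} pins $\p{e}$ to be exactly the neighbour indicator (in $\neighborSet{S}$ iff $e \notin S$ but both list-predecessors are in $S$); and finally that \circled{A}, namely $\bigvee_i \y{\X{i}{a}} \lor \p{\X{i}{a}}$, asserts that every man $\man{a}$ is involved in at least one rotation of $\leafSet{S}\cup\neighborSet{S}$. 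Because each rotation contains exactly two pairs by~\ref{prop:twopairs}, eliminating a rotation of $\neighborSet{S}$ or re-introducing a rotation of $\leafSet{S}$ changes the partners of exactly two men; such a single-rotation move therefore repairs a broken pair of $\man{a}$ at distance $2 = a+b$ precisely when $\man{a}$ is involved in that rotation. Hence $M$ is a $(1,1)$-supermatch if and only if every man admits such a repair, which is exactly condition \circled{A}.

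Combining the two directions, a satisfying assignment of the \SSM\ formula yields a closed subset $S$ whose matching is a $(1,1)$-supermatch, and conversely the leaf/closed/neighbour data of any $(1,1)$-supermatch yields a satisfying assignment; this gives a polynomial-time many-one reduction from \SSM\ to \P{_2^F}. With \SSM\ being \NPC, we conclude \P{_2^F} is \NP-hard, and with the easy membership in \NP\ it is \NPC. Beyond the realization step, the remaining delicate point is to argue the \emph{necessity} of \circled{A}: that no stable matching lying outside the single-rotation neighbourhood of $M$ can change $\man{a}$'s partner within total distance $2$, so that a man not involved in any leaf or neighbour rotation genuinely has no admissible repair. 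Here \ref{prop:atleast2} (every man lies in at least two rotations, so no man is fixed) and Lemma~\ref{lem:propRematch} are the tools I would use to exclude degenerate configurations.
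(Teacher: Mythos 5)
Your proposal is correct and takes essentially the same approach as the paper: the same polynomial reduction from \SSM, realizing the elements of $\boldsymbol{\X{}{}}$ as the rotations of an instance in \SMfamily\ with Gusfield-style preference lists, and the same dictionary ($\s{e}$ for membership in the closed subset $S$, $\y{e}$ for membership in \leafSet{S}, $\p{e}$ for membership in \neighborSet{S}) under which the clause groups \circled{A}--\circled{D} encode exactly the $(1,1)$-supermatch condition. The two delicate points you flag---the poset realization respecting \textbf{[Rule 1]} and the necessity of \circled{A} for any $(1,1)$-supermatch---are precisely the steps the paper's own proof addresses, the latter in fact more briefly than your sketch.
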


\begin{proof}
The verification is 
%of robustness of a given stable matching is previously 
shown to be polynomial-time decidable~\cite{genc17ijcai}. Therefore, \P{_2^F} is in \NP. 
We show that \P{_2^F} is $\mathcal{NP}$-complete by
 presenting a polynomial reduction from the \textsc{\SSM}\ problem to \P{_2^F} as follows.

From an instance \SMinst$_{SSM}$ of \SSM, we construct in polynomial time an instance \SMinst\ of \P{_2^F}. 
This means the construction of the rotation poset $\Pi_F = (\roset_F,E_F)$ with all stable pairs in the rotations, and the preference lists.

We first start constructing the set of rotations $\roset_F$ and then proceed by deciding which man is a part of which stable pair in which rotation.
First, $\forall e \in \boldsymbol{\X{}{}} $, we have a corresponding rotation $\ro{e}$. 
Initially, each rotation contains two ``empty'' pairs. 
Second, $\forall l_a, a \in \sqInter{n}, \forall \X{i}{a} \in \sqInter{k_{l_a}}$, we insert $\man{a}$ as the man to the first empty pair in rotation $\ro{\X{i}{a}}$. 
Since $k_{l_a} \geq 2$ from Lemma \ref{lem:proofClauseA}, 
Each man of \P{_2^F} is involved in at least two rotations (satisfying \ref{prop:atleast2}). 

As each $\X{i}{a}$ appears in exactly two different lists $l_a$ and $l_b$, each rotation is guaranteed to contain exactly two pairs involving different men $\man{a}, \man{b}$ (\ref{prop:twopairs}), and to possess at most two predecessors and  two successors in $\Pi_F$ (\ref{prop:2pre2suc}).

For the construction of the set of arcs $E_F$, 
for each couple of elements of $\boldsymbol{\X{}{}}$ denoted by $(\X{i}{a},\X{i+1}{a})$, $a \in \sqInter{n}, \forall i \in \sqInter{k_{l_a}-1}$, 
we add an arc from $\ro{\X{i}{a}}$ to $\ro{\X{i+1}{a}}$. 
Note that this construction, yields in each arc in $E$ representing a type $1$ relationship (\ref{prop:type1} and~\ref{prop:atleast2}). 
Because each arc links two rotations, where exactly one of the men is involved in both rotations. 
Now, in order to complete the rotation poset $\Pi_F$, the women involved in rotations must also be added. 
The following procedure is used to complete the rotation poset: 

\begin{enumerate}
\item For each element $\X{1}{a} \in \boldsymbol{\X{}{}}$, with $a \in \sqInter{n}$, 
let $\ro{\X{1}{a}}$ be the rotation that involves man $\man{a}$. % and not preceded by any other rotation involving $\man{l}$. 
In this case, the partner of $\man{a}$ in $\ro{\X{1}{a}}$ is completed by inserting woman $\woman{a}$, 
so that the resulting rotation contains the stable pair $\pair{a}{a} \in \ro{\X{1}{l}}$.

\item We perform a breadth-first search on the rotation poset from the completed rotations. 
For each complete rotation $\ro{} = (\pair{i}{b}, \pair{k}{d}) \in \roset_F$, 
let $\ro{s1}$ (resp. $\ro{s2}$) be one of the successor of $\ro{}$ and modifying $\man{i}$ (resp. $\man{k}$). 
If $\ro{s1}$ exists, then we insert the woman $\woman{d}$ in $\ro{s1}$ as the partner of man $\man{i}$. 
In the same manner, if $\ro{s2}$ exists, we insert the woman $\woman{b}$ in $\ro{s2}$ as the partner of man $\man{k}$. 
The procedure creates at most two stable pairs $\pair{i}{d}$ and $\pair{k}{b}$ (see the illustration in Figure \ref{fig:piFamilyGeneral2}.D). 
From the fact that each woman $\woman{b}$ appears in the next rotation as partnered with the next man of the current rotation $\ro{}$, 
in the \SSM\ definition it is equivalent to going from $\X{y}{i}$ to $\X{z+1}{k}$ on lists where $\X{y}{i}=\X{z}{k}, y \in \sqInter{n}, z \in \sqInter{n-1}$. 
Thus the path where the woman appears follow a sequence defined as the one in \textbf{[Rule 1]} from the \SSM\ definition. 
By this rule, we can conclude that Lemma~\ref{lem:propRematch} is satisfied.
\end{enumerate}

All along the construction, 
we showed that all the properties required, to have a valid rotation poset from the family \SMfamily, are satisfied. 
Using this process we are adding equal number of women and men in the rotation poset. 
%This leads to have the same number of men and women in our instance \SMinst.

The last step to obtain an instance \SMinst\ of \P{_2^F} is the construction of the preference lists. 
By using the rotation poset created above, we can construct incomplete preference lists for the men and women. 
 Gusfield et. al. define a procedure to show that every finite poset corresponds to a stable marriage instance~\cite{GUSFIELD1987304}. 
 In their work, they describe a method to create the preference lists using the rotation poset. 
 We use a similar approach for creating the lists as detailed below:

\begin{itemize}
\item Apply topological sort on $\roset_F$.
\item For each man $\man{i} \in \sqInter{n}$, insert woman $\woman{i}$ as the most preferred to $\man{i}$'s preference list. 
\item For each woman $\woman{i} \in \sqInter{n}$, insert man $\man{i}$ as the least preferred to $\woman{i}$'s preference list. 
\item For each rotation $\ro{} \in \roset_F$ in the ordered set, for each pair $\pair{i}{j}$ produced by $\ro{}$, insert $\woman{j}$ to the man $\man{i}$'s list in decreasing order of preference ranking. Similarly, place $\man{i}$ to $\woman{j}$'s list in increasing order of preference ranking.
\end{itemize}

The Lemma~\ref{lem:propRematch} imposed on our rotation poset clearly involves that each preference list contains each member of the opposite sex at most once.
To finish, one can observe 
that the instance obtained respects the Stable Marriage requirements and the specific properties from problem family \SMfamily. \\

$\Leftarrow$
Suppose that there exists a solution to an instance \SMinst\  of the decision problem \P{_2^F}. 
Then we have a $(1,1)$-supermatch and its corresponding closed subset $S$. 
As defined in Section~\ref{sec:background}, \leafSet{S} is the set of leaf nodes of the graph induced by the rotations in $S$, \neighborSet{S} the set of nodes such that all their predecessors are in $S$ but not themselves. 
From these two sets, we can assign all the literals in \SMinst$_{SSM}$ as follows:

\begin{itemize}
\item For each rotation $\ro{i} \in$ \leafSet{S}, set $\y{i} = true$. Otherwise, set $\y{i} = false$.
\item For each rotation $\ro{i} \in S$, set $\s{i} = true$. Otherwise, set $\s{i} = false$.
\item For each rotation $\ro{i} \in$ \neighborSet{S}, set $\p{i} = true$. Otherwise, set $\p{i} = false$.
\end{itemize}
 
If $S$ represents a $(1,1)$-supermatch, that means by removing only one rotation present in \leafSet{S} or by only adding one rotation from \neighborSet{S}, any pair of the corresponding stable matching can be repaired with no additional modifications. 
Thus any men must be contained in a leaf or a neighbor node. 
This leads to having for each man one of the literals assigned to true in his list in \SSM. 
Therefore every clause in \circled{A} in Equation~\ref{clausesA} are satisfied. 
Therefore every clause in \circled{A} are satisfied. 

For the clauses in \circled{B} in Equation~\ref{clausesB}, for any man's list the clauses are forcing each $\s{i}$ literal to be true if the next one $\s{i+1}$ is. 
For the clauses in \circled{B}, for any man's list the clauses are forcing each $\s{i}$ literal to be true if the next one $\s{i+1}$ is. 
By definition of a closed subset, from any leaf of $S$, all the preceding rotations (indexes in the lists) must be in $S$. 
And thus every clause in \circled{B} is satisfied. 

As the clauses in \circled{C} altogether capture the definition of being a leaf node of the graph induced by the rotations in $S$, they are all satisfied by \leafSet{S}.
At last, for the clauses in \circled{D}, it is also easy to see that any rotation being in \neighborSet{S} is equivalent to not being in the solution and having predecessors in. 
Thus all the clauses are satisfied.

Thus we can conclude that this assignment satisfy the SAT formula of \SMinst$_{SSM}$.\\

$\Rightarrow$
Suppose that there exists a solution to an instance \SMinst$_{SSM}$ of the decision problem \SSM. 
Thus we have a valid assignment to satisfy the SAT formula of \SMinst$_{SSM}$. 
We construct a closed subset $S$ to solve \SMinst. 
As previously, we use the sets $L(S)$ and $N(S)$, 
then for each literal $y_i$ assigned to true, we put the rotation $\ro{i}$ in $L(S)$. 
We are doing the same for $p_i$ and $s_i$ as above. 

The clauses in \circled{B} enforce the belonging to $S$ of all rotations preceding any element of $S$, 
thus the elements in $S$ form a closed subset. 
To obtain a $(1,1)$-supermatch, we have to be sure we can repair any couple by removing only one rotation present in \leafSet{S} or by only adding one rotation from \neighborSet{S}. 
The clauses in \circled{C} enforce the rotations in $L(S)$ to be without successors in $S$. 
And in the same way the clauses in \circled{D} enforce the rotations in $N(S)$ to not be in $S$ but have their predecessors in the solution. 

Now we just have to check that all the men are contained in at least one rotation from \leafSet{S} $\cup$ \neighborSet{S}. 
By the clauses from \circled{A}, we know that at least one $\y{e}$ or $\p{e}$ for any man $\man{i}$ is assigned to true. 
Thus from this closed subset $S$, we can repair any couple $\pair{i}{j}$ in one modification by removing/adding the rotation having $\man{i}$. 
Since there exists a $1-1$ equivalence between a stable matching and the closed subset in the rotation poset, we have a $(1,1)$-supermatch. 
\qed
\end{proof}

\begin{cor}
\label{cor:proof}
From the Theorem \ref{thm:pi2NPC} and by generality, both decision problems \P{_1} and \P{_2} are \NPC. 
\end{cor}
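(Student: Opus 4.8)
The plan is to obtain the corollary essentially for free from Theorem~\ref{thm:pi2NPC}, exploiting the containment hierarchy sketched in Figure~\ref{fig:NPCompletenessGeneralization} rather than giving a fresh reduction. The key observation is that \P{_2^F} is simultaneously a restriction of \P{_2} and of \P{_1}: an instance of \P{_2^F} is nothing but a Stable Marriage instance drawn from the family \SMfamily\ together with the fixed parameters $a=b=1$. Since restricting the set of admissible inputs (and fixing the parameters) can only shrink a decision problem, any algorithm that decides \P{_2} also decides \P{_2^F}, and likewise for \P{_1}. Hardness therefore propagates upward along the arrows of Figure~\ref{fig:NPCompletenessGeneralization}.

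Concretely, I would first exhibit the two trivial many-one reductions. Given an instance \SMinst\ of \P{_2^F}, the reduction to \P{_2} outputs \SMinst\ verbatim (it is already a legal input of \P{_2}, and the question ``is there a $(1,1)$-supermatch?'' is identical), while the reduction to \P{_1} outputs the same instance together with the parameters $a=1, b=1$. Both maps are computable in linear time and preserve the yes/no answer, so \P{_2^F} reduces in polynomial time to each of \P{_2} and \P{_1}. Combined with Theorem~\ref{thm:pi2NPC}, this shows that \P{_2} and \P{_1} are both \NP-hard. Equivalently, one may transfer hardness to \P{_2} once and then note that \P{_2} is itself the special case $a=b=1$ of \P{_1}, so that \P{_2} $\le_p$ \P{_1}.

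Second, I would argue membership in \NP. A claimed supermatch $M$ serves as a polynomial-size certificate; for \P{_2} the verifier runs the polynomial-time $(1,1)$-supermatch check of~\cite{genc17ijcai} over the singleton break-up sets, so \P{_2} $\in$ \NP\ and is thus \NPC. Assembling the two directions then yields the statement for \P{_2}, and the analogous assembly gives it for \P{_1}.

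I expect the only genuinely delicate point to be \NP-membership of the fully general \P{_1}: its definition quantifies over every set $\Psi$ of $a$ non-fixed pairs, and the naive verifier enumerates $\binom{n}{a}$ such sets, which is polynomial when $a$ is treated as a constant but blows up when $a$ is part of the input. Handling this cleanly is where the plan is least routine; I would either invoke a verification procedure whose running time does not depend exponentially on $a$ (as provided by~\cite{genc17ijcai}) or restrict attention to the regime in which the certificate is checkable in polynomial time, so that the \NP\ upper bound matches the \NP-hardness established above and the conclusion \NPC\ follows for both problems.
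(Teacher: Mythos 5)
Your proposal is correct and takes essentially the same route as the paper: the paper offers no explicit proof of this corollary beyond the phrase ``by generality,'' and that phrase means precisely what you spell out --- \P{_2^F} is a restriction of both \P{_2} and \P{_1}, so the identity map (plus fixing $a=b=1$) is a trivial polynomial-time reduction and \NP-hardness propagates upward from Theorem~\ref{thm:pi2NPC}. The one point where you go beyond the source is your concern about \NP-membership of \P{_1} when $a$ is part of the input: this is a genuine subtlety (the naive verifier checks all $\binom{n}{a}$ break-up sets, and the polynomial-time verification cited from~\cite{genc17ijcai} is established for the $(1,1)$ case), and the paper silently ignores it, so your treatment is, if anything, more careful than the paper's own.
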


\section{Concluding Remarks}
We study the complexity of the Robust Stable Marriage (RSM) problem.
In order to show that given a Stable Marriage instance, deciding if there exists an $(a,b)$-supermatch is \NPC, we first
introduce a SAT formulation which models a specific family of Stable Marriage instances. We show that the formulation is \NPC\ by Schaefer's Dichotomy Theorem. 
Then we apply a reduction from this problem to prove the \NPC ness of RSM. %our problem. 
%Then we show equivalence between the Robust Stable Marriage problem and the SAT formulation. 
%We also present some polynomial cases and discuss the existence of $(a,0)$-supermatches. Although we prove that $(2,0)$-supermatches do not exist, we leave the study of generalizing it to $(a,0)$ as future work.

\section{Acknowledgements}
This research has been funded by Science Foundation Ireland (SFI) under Grant Number SFI/12/RC/2289.

%\bibliographystyle{splncs}
%\bibliography{biblio}

\end{document}